\documentclass[letterpaper,10pt,conference]{ieeeconf}

\usepackage{cite}
\usepackage{comment}
\usepackage{makecell}
\usepackage{xcolor}
\usepackage[normalem]{ulem}
\makeatletter\let\NAT@parse\@undefined\makeatother
\usepackage[colorlinks=true, linkcolor=blue, urlcolor=blue, citecolor=black]{hyperref}
\usepackage{cntsys}

\IEEEoverridecommandlockouts
\title{\LARGE \bf Exact and Approximate Convex Reformulation of Linear Stochastic Optimal Control with Chance Constraints}

\author{Tanmay Dokania$^{1}$ and Yashwanth Kumar Nakka$^{1}$ 
\thanks{$^{1}$Authors are with the Daniel Guggenheim School of Aerospace Engineering at the Georgia Institute of Technology, Atlanta, GA, 30332 USA. Email: [tdokania3,ynakka3]@gatech.edu
}}
\begin{document}
\maketitle
\thispagestyle{empty}
\pagestyle{empty}
\begin{abstract}
In this paper, we present an equivalent convex optimization formulation for discrete-time stochastic linear systems subject to linear chance constraints, alongside a tight convex relaxation for quadratic chance constraints. By lifting the state vector to encode moment information explicitly, the formulation captures linear chance constraints on states and controls across multiple time steps exactly, without conservatism, yielding strict improvements in both feasibility and optimality. For quadratic chance constraints, we derive convex approximations that are provably less conservative than existing methods. We validate the framework on minimum-snap trajectory generation for a quadrotor, demonstrating that the proposed approach remains feasible at noise levels an order of magnitude beyond the operating range of prior formulations.
\end{abstract}
\section{Introduction}

Stochastic Motion Planning refers to the problem of constructing a path from an initial to a final state, subject to the uncertainty from sensors, dynamics, or obstacles, while avoiding them and minimizing a cost function~\cite{nakka_trajectory_2019,nakka_trajectory_2023,lavalle_planning_2006}. 
Uncertainty leads to a distribution over states, and therefore a notion of `tube' becomes necessary, which is not accounted for by deterministic planners, leading to incorrect safety and quality assessment~\cite{yu_gaussian_2023}. Examples of such systems include spacecrafts encountering interstellar objects~\cite{tsukamoto_neural-rendezvous_2024} and Unmanned Aerial Vehicles (UAVs) under turbulent winds~\cite{singh_robust_2017}.

A motion planning problem under optimality and chance constraints can be formulated as a stochastic optimal control problem. Optimization-based control of linear stochastic systems 
serves as a critical backbone for several nonlinear approaches that iteratively linearize around different points~\cite{lew_chance-constrained_2020,nakka_trajectory_2023}, or run multiple in parallel~\cite{mustafa_probabilistic_2023}. In this simplified setting, the problem is convex and can be solved in real time; therefore, a better understanding of this setup can significantly improve algorithms upstream.

Covariance control refers to finding an optimal control law that drives a distribution to another distribution with specified covariance.
Such approaches with a feedback structure can be used to solve the stochastic motion planning problem.
Without chance constraints, the problem simplifies because the mean and covariance optimization can be decoupled; as a result, theoretical results show that feedback laws are optimal for steering distributions~\cite{liu_optimal_2025}. Additionally, a lossless convex relaxation is used to solve the problem under both multiplicative and additive noise efficiently~\cite{balci_covariance_2023}.
Decoupling the mean and covariance, however, also decouples the risk-performance tradeoff, as tighter feedback laws navigate a narrow corridor rather than an open space. This can be formulated as a set of chance constraints. In the study of linear chance constraints, feedback laws are assumed optimal and used to formulate an optimization problem that yields nonconvex constraints. They are solved via linearization~\cite{rapakoulias_discrete-time_2023} or successive convexification~\cite{pilipovsky_computationally_2024}. To the best of the author's knowledge, there are no theoretical results on the optimality of feedback laws under a chance-constrained setting for the covariance control problem.

We propose a lifted representation method, inspired by~\cite{nakka_trajectory_2023}, that captures the distribution of state and control input across different time steps. First, we show that this allows for the exact reformulation of linear chance constraints as second-order cone constraints under a Gaussian distribution, the dynamic constraints remain affine, and the representation of general linear functions involving states and controls from different timesteps remains possible. 
Secondly, we develop convex approximations of quadratic chance constraints as linear-matrix and quadratic constraints and compare their approximation tradeoff.
Finally, we show that our approach outperforms other approaches that can represent state and control chance constraints~\cite{rapakoulias_discrete-time_2023,lew_chance-constrained_2020} in terms of feasibility and optimality, even with an order-of-magnitude larger noise standard deviation.

\textit{\textbf{Notation:}} Random variables are denoted by boldface letters (e.g., $\state k, \cntrl k, \basv, \noise k$). The cones of $n \times n$ positive definite and positive semi-definite matrices are denoted by $\pd{n}$ and $\psd{n}$, respectively, and $I_n$ represents the $n \times n$ identity matrix. Norm of a vector $x$ with respect to a matrix $A\in\pd{n}$ is denoted as $\quadform{A}{x} = x^\top A x$. The operators $\Tr(\cdot)$, $\vc{\cdot}$, and $\otimes$ denote the trace, the vectorization of a matrix, and the Kronecker product, respectively. For statistical operations, $\Ex{\cdot}$ and $\Vr{\cdot}$ denote the expectation and variance. A Gaussian random vector with mean $\mu$ and covariance matrix $\Sigma$ is denoted by $\N{\mu}{\Sigma}$. Finally, $\Phi_{\mathcal N}(\cdot)$ and $\Phi_{\chi^2_n}(\cdot)$ represent the cumulative distribution functions (CDFs) of the standard Gaussian distribution and the chi-squared distribution with $n$ degrees of freedom, respectively.

\begin{table*}[t] 
    \centering
    \caption{Overview and Comparison of different Convex Optimization-based solvers for Linear Stochastic Optimal Control Problem. The decision variables for all the approaches include the state and control mean. For the stochastic component, they assume a feedback structure $K$ or optimize over the state-control covariance $\Vr{[\state{k}, \cntrl{k}] }:= \Sigma^{xu}$. 
    $\checkmark$ and $\approx$ demarcate exact and approximate representation capability, while $\times$ denotes inability to represent the said constraint or property.
    }
    \label{tab:lcc_comparison}. 
    \begin{tabular}{c l l c c c c c c c c}
        \toprule
        \textbf{Method} & 
        \makecell{\textbf{Decision} \\ \textbf{Variables}} & 
        \makecell{\textbf{LCC} \\ \textbf{Reform.}} & 
        \makecell{\textbf{Problem} \\ \textbf{Type}} & 
        \makecell{\textbf{Distribution} \\ \textbf{Optim.}} & 
        \makecell{\textbf{Time} \\ \textbf{Varying}} & 
        \makecell{\textbf{Term.} \\ \textbf{Cov.}} & 
        \makecell{\textbf{Quad.} \\ \textbf{CC}} & 
        \makecell{\textbf{Control} \\ \textbf{LCC}} & 
        \makecell{\textbf{State} \\ \textbf{LCC}} & 
        \makecell{\textbf{Multi-step} \\ \textbf{LCC}} \\
        \midrule
         \cite{okamoto_optimal_2018}& $\mux{}, \muu{}, K$ & Exact SOCP & Control & $\checkmark$ & $\checkmark$ & $\checkmark$ & $\times$ & $\times$ & $\checkmark$ & $\checkmark$ \\
        \cite{rapakoulias_discrete-time_2023}  & $\mux{}, \muu{}, \Sigma^{xu}$ & Approx. Linear & Control & $\checkmark$ & $\checkmark$ & $\checkmark$ & $\times$ & $\approx$ & $\approx$ & $\times$ \\
        \cite{lew_chance-constrained_2020}   & $\mux{}, \muu{}$ & Exact Linear & Control & $\times$ & $\checkmark$ & $\times$ & $\times$ & $\checkmark$ & $\checkmark$ & $\times$ \\
        \cite{vitus_feedback_2011,skaf_design_2010}  & $\mux{}, \muu{}, K$ & Exact SOCP & Control & $\checkmark$ & $\times$ & $\times$ & $\times$ & $\times$ & $\checkmark$ & $\times$ \\
        \textbf{Ours}    & $\mux{}, \muu{}, \varx{}, \varu{}$ & Exact SOCP & \textbf{Motion Plan.} & $\checkmark$ & $\checkmark$ & $\checkmark$ & \textbf{$\approx$} & $\checkmark$ & $\checkmark$ & $\checkmark$ \\
        \bottomrule
    \end{tabular}
\end{table*}
\section{Problem Formulation}
This section presents the stochastic optimal control problem with chance constraints. The following stochastic linear time-varying system is considered for $k \in \itt {N-1}$:
\begin{equation}
\begin{split}
\state {k+1} &= A_k \state k + B_k \cntrl k + D_k\noise k, \label{eq:sys_dyn}\\
\state 0 &\sim \N{\mu_0^\star}{\Sigma_0},
\qquad
\noise k \sim \N{0}{I_{n_w}},
\end{split}
\end{equation}
here, $N$ is the horizon, $\{\state k\}_{k=0}^N, \{\cntrl k\}_{k=0}^{N-1}, \{\noise k\}_{k=0}^{N-1}$ are the state, control and noise random processes over $\R^{n_x}, \R^{n_u}, \R^{n_w}$ respectively,
and $A_k\in\R^{n_x \times n_x}$, $B_k \in \R^{n_x \times n_u}$, $D_k \in \R^{n_x \times n_w}$ are the system, control and disturbance matrices respectively.
\begin{asmp}
    The initial state $(\state 0)$ and the process noise $(\noise k)$ at each time step are independent Gaussian R. Vs., implying $\Ex{\noise i \noise j^\top} = \delta_{ij} I_{n_w},\ \forall i,j \in \itt{N-1}$ and $\Ex{\noise i \state 0^\top } = 0,\ \forall i \in \itt{N-1}$.
    The control input $(\cntrl k)$ is an R.V. that can be represented as a linear combination of the independent R.Vs. realized up to time step $k$, i.e., $\state 0, \noise 0, \dots, \noise {k-1}$. This class of causal and affine policies is denoted by $\Pi_{\text{affine}}$.
\label{asmp:independence_represent}
\end{asmp}
As the state and control are random processes, chance constraints are required to give probabilistic guarantees. The constraints generally considered are linear chance constraints on state or control at a given time step:
\begin{equation*}
    \Pr {a^\top \state i \le b} \ge 1 - \epsilon,\quad \Pr {\alpha^\top \cntrl i \le b} \ge 1 - \epsilon,
\end{equation*}
here, $a \in \R^{n_x}, \alpha \in \R^{n_u}, b \in \R$, and $\epsilon>0$ measures the confidence of the constraint. We propose the consideration of mixed-multistep linear chance constraints that allow for representing chance constraints across multiple time steps and mixed with control constraints of the following type,
\begin{equation}
    \Pr{a_i^\top \state{0:N} + \alpha_i^\top\cntrl{0:N-1} \le b_i} \ge 1-\epsilon_i,  \label{eq:def_lcc_mixed_multi}
\end{equation}
here, $a_i\in\R^{(N+1) n_x}, \alpha_i \in \R^{N n_u}, b_i\in \R, \epsilon_i  \in [0, 0.5)$, $\state{0:N} = [\state 0^\top, \state 1 ^\top, \dots, \state N ^\top]^\top$, $\cntrl{0:N} = [\cntrl 0^\top, \cntrl 1 ^\top, \dots, \cntrl {N-1} ^\top]^\top$  $\forall \ i \in \itt{m}$.
\begin{rmrk}
    Polytropic chance constraints can be described using multiple linear chance constraints, where the risk associated is allocated using Boole's and a bilevel optimization problem~\cite{vitus_feedback_2011}.
\end{rmrk}
Representation of state-control-coupled constraints allows us to define a linearized version of power constraints on a motor or friction-circle constraints for tire grip.

We consider centered quadratic chance constraints, which are represented as
\begin{equation}
\Pr{\quadform{ \termQ^{-1}}{ (\state i-\Ex{\state i})} \le 1}\ge 1-\epsilon, \label{eq:termQ}
\end{equation}
here, $\termQ \in\pd{n_x}, \epsilon \in [0,1]$ defines the ellipsoid that the state must be in, with a high confidence measure  

In covariance control problems, a constraint on the terminal covariance is imposed with $\Sigma_f \in \pd{n_x}$,
\begin{equation}
    \Vr{x_N} \preceq \Sigma_f \label{eq:cov_control}.
\end{equation}

For motion planning problems, waypoints that the robot needs to follow through at specified timesteps $T_{wp}\subseteq \itt{N} $, can be described as soft constraints through the cost function $\quadform{Q_\star}{(\Ex{\state k} - \mux k ^\star)}$
where $ \mu_k^\star \in \R^{n_x}$.

Under quadratic cost on the system state and control input, the finite horizon optimal control problem with chance constraints is formulated as:
\begin{problem}[\textbf{Linear Stochastic Optimal Control}]\label{prob:LSOC}
\begin{equation}
\scalebox{1}{$
\begin{aligned}
    \min_{\state {}, \cntrl{}}  J(\state{},\cntrl{}) =
    \sum_{i \in T_{wp}} \quadform{ Q_\star} {(\Ex{\state i} - \mux i ^\star)}
    \\
+\Ex{
\sum_{k=0}^{N-1}
 \quadform{Q_k}{\state k} + \quadform{R_k}{\cntrl k})+\quadform{Q_N}{(\state N-x^\star)} } 
\label{eq:prob1:orig_cost}
\end{aligned}$}
\end{equation}
subject to $\forall\ k \in \{0, 1, \dots, N-1\}$ 
\begin{align}
{\eqref{eq:sys_dyn}, \eqref{eq:def_lcc_mixed_multi}, \eqref{eq:termQ},\eqref{eq:cov_control}}\notag
\end{align} 
here, $Q_k \in \psd{n_x}, R_k \in \psd{n_u}$ for $k \in \itt{N-1}$ and $Q_N, Q^\star \in \psd{n_x}, x^{\star}\in\R^{n_x}$.
\end{problem}
This optimization problem is hard to numerically solve as it is formulated over R.Vs. Hence, we propose a novel reformulation, inspired by~\cite{nakka_trajectory_2023}, under which the problem is convex and the constraints are reformulated: linear chance constraints exactly as second-order cone constraints, quadratic chance constraints approximately as linear-matrix or quadratic constraints, and the covariance constraint exactly as a linear-matrix constraint.
\section{Convex Reformulation}
In this section, we first present the lifted state and control representation, followed by the dynamics, cost, and constraints. This section concludes with the key result that the proposed formulation is an exact reformulation of~\Cref{prob:LSOC}.
\subsection{Stochastic basis}
Under the dynamics \eqref{eq:sys_dyn} and~\Cref{asmp:independence_represent}, it follows that the state and control remain jointly Gaussian at every time step. Therefore, we propose using scalar Gaussians to capture the correct marginal and conditional distributions, rather than propagating covariances.

As $\state 0 \sim \N{\mu_0^\star}{\Sigma_0}$, it can be represented as $\state 0 = \mu_0^\star + \mvarx 0^\star \xiv 0$, where $ \xiv 0 \sim \N{0}{I_{n_x}}$ and $\mvarx 0^\star \mvarx 0^{\star\top} = \Sigma_0$, thereby, requiring $n_x$ scalar Gaussians. Additionally, at each time step $k$, $\noise k$ introduces additional $n_w$ independent scalar Gaussians. Therefore, we define the stochastic basis recursively as follows:
\begin{equation}
\xiv {k+1} := \begin{bmatrix}
    \xiv k \\ \noise k
\end{bmatrix},\ \ \forall\ k \in \{0, 1, \ldots, N-1 \} \label{eq:stoch_basis_def}.
\end{equation}
Hence, $\xiv k \sim \N{0}{I_{\bas k}}$, where $\bas k := n_x + k n_w$.
\subsection{State representation}
The R.V. $\state k$ is represented using two vectors,
$
    \mux{k} \in \R^{n_x}, \varx k \in \R^{n_x\bas k} .
$
The vector $\varx k$ is the vectorized form of the matrix $\mvarx k \in \R^{n_x \times \bas k}$. Hence, there is a bijective mapping between them:
\begin{equation}
\begin{split}
    \varx k &= \vc{ {\mvarx k}}, \\ \mvarx k &= (\vc{I_{\bas k}}^\top \otimes I_{n_x}) (I_{\bas k} \otimes \varx k)  .
\end{split}
\end{equation}
The state at step $k$ is represented as:
\begin{equation}
    \state k = \mux k + {\mvarx k} \xiv k \label{eq:state_rep}.
\end{equation}
It follows that:
\begin{equation}
    \Ex{\state k} = \mux k, \ \Vr{\state k} = \mvarx{k} \mvarx{k}^\top \label{eq:state_mean_cov}.
\end{equation}
\subsection{Control representation}
Following~\Cref{asmp:independence_represent}, the control input at step $k$ depends on the basis elements realized up to the time step, i.e., $\xiv k$. Therefore, it retains the causal property.
Similarly, $\cntrl k$ is represented using two vectors
$
    \muu{k} \in \R^{n_u}, \varu k \in \R^{n_u\bas k} 
$.
The matrix representation ${\mvaru k} \in \R^{n_u \times \bas k}$ is uniquely related to the vector representation $\varu k$ as follows:
\begin{equation}
\begin{split}
   \varu k &=  \vc{ {\mvaru k}},\\
    \mvaru k &= (\vc{I_{\bas k}}^\top \otimes I_{n_u}) (I_{\bas k} \otimes \varu k)  .
\end{split}
\end{equation}
The control at step $k$ is represented as:
\begin{equation}
    \cntrl k = \muu k + {\mvaru k} \xiv k \label{eq:control_rep}.
\end{equation}
\begin{lemma}\label{l1:linear_chance}
    Let $\mathbf{y}$ be a scalar R.V. If it is representable in the basis $\xiv k$, i.e., $\mathbf y = y_0 + y \xiv k$, where $y_0 \in \R, y \in \R^{1 \times \bas k}$ then, 
    a chance constraint $\Pr{\mathbf y \le b}\ge 1-\epsilon$ can be exactly represented as
    $
        y_0 + \Phi_{\mathcal N}^{-1} (1-\epsilon) \|y \|_I \le b ,
    $ 
    which is a second-order cone exact reformulation of the chance constraint for $0\le \epsilon < 0.5$.
\end{lemma}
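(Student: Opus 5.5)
The plan is to compute the distribution of $\mathbf y$ explicitly and then reduce the chance constraint to a deterministic inequality on its mean and standard deviation. Since $\xiv k \sim \N{0}{I_{\bas k}}$ by the construction in \eqref{eq:stoch_basis_def}, and $\mathbf y = y_0 + y\xiv k$ is an affine image of a Gaussian vector, $\mathbf y$ is a scalar Gaussian. Its mean is $\Ex{\mathbf y} = y_0$ and its variance is $\Vr{\mathbf y} = y I_{\bas k} y^\top = \|y\|_I^2$. Thus $\mathbf y \sim \N{y_0}{\|y\|_I^2}$, with standard deviation $\sigma := \|y\|_I$.

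Next, I treat the cases $\sigma>0$ and $\sigma=0$ separately. If $\sigma>0$, standardize with $\mathbf z = (\mathbf y - y_0)/\sigma \sim \N{0}{1}$. Then
\[
\Pr{\mathbf y \le b} = \Phi_{\mathcal N}\!\left(\frac{b-y_0}{\sigma}\right).
\]
Because $\Phi_{\mathcal N}$ is continuous and strictly increasing, $\Pr{\mathbf y\le b}\ge 1-\epsilon$ holds if and only if $(b-y_0)/\sigma \ge \Phi_{\mathcal N}^{-1}(1-\epsilon)$. Multiplying by $\sigma>0$ gives $y_0 + \Phi_{\mathcal N}^{-1}(1-\epsilon)\|y\|_I \le b$.

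The degenerate case $\sigma=0$ is the main subtlety, and the same equivalence has to be checked by hand. Here $\mathbf y = y_0$ almost surely, so $\Pr{\mathbf y\le b}$ equals $1$ if $y_0\le b$ and $0$ otherwise. Since $1-\epsilon>0$, the chance constraint is equivalent to $y_0\le b$. This is exactly the claimed inequality with $\|y\|_I=0$. For $\epsilon\in(0,0.5)$ the quantile $\Phi_{\mathcal N}^{-1}(1-\epsilon)$ is finite. The endpoint $\epsilon=0$ would make the quantile infinite, so I would interpret it in the extended sense: it forces $\|y\|_I=0$ and $y_0\le b$, which matches the probability-one requirement.

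Finally, I establish the second-order cone structure. For $0\le\epsilon<0.5$ we have $1-\epsilon>0.5$, so $\kappa:=\Phi_{\mathcal N}^{-1}(1-\epsilon)>0$. The constraint $\kappa\|y\|_I \le b - y_0$ then says that $(b-y_0,\ \kappa y^\top)$ lies in the standard second-order cone, which is convex and affine in the decision data $(y_0,y)$. Positivity of $\kappa$ is the only place where $\epsilon<0.5$ is used. For $\epsilon>0.5$ the coefficient would be negative and the constraint reverse-convex, which explains the stated range.
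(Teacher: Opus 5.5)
Your proposal is correct and follows the paper's route: you show $\mathbf y$ is Gaussian with mean $y_0$ and variance $\|y\|_I^2$, express the probability through $\Phi_{\mathcal N}$, invert by monotonicity, and get the second-order cone form from the positivity of $\Phi_{\mathcal N}^{-1}(1-\epsilon)$ when $\epsilon<0.5$. You also treat separately the degenerate case $\|y\|_I=0$, where the paper's formula divides by zero, and the endpoint $\epsilon=0$, where the quantile is infinite; the paper's proof passes over both.
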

\begin{proof}
   As $\mathbf {y} $ is represented as a linear combination of jointly Gaussian R.V. $\xiv k$, it is a Gaussian R.V. Following the linearity of expectation and the fact that $\xiv k \sim \N{0}{I_{\bas k}}$, we obtain $\Ex{\mathbf y} = y_0, \ \Vr{\mathbf y} = \|y \|^2_2 $. Following the definition of $\Phi_{\mathcal N}$, we have
    \begin{equation}
        \Pr {y \le b } = \Phi_{\mathcal N} \left( \frac{b-y_0}{\|  y\|_I} \right).
    \end{equation}
    Using monotonicity of $\Phi_\mathcal{N}$, we get the above-mentioned constraint, which is a second-order cone for $0\le \epsilon < 0.5$.
\end{proof}
\begin{lemma}
    Under the representation \eqref{eq:state_rep}, \eqref{eq:control_rep} and ~\Cref{asmp:independence_represent}, the dynamics \eqref{eq:sys_dyn} is reformulated as:
    \begin{subequations}
\begin{align}
    \mux {k+1} &= A_k \mux k + B_k \muu k \label{eq:l2:mu_dynamics},\\
    \varx{k+1} &= \begin{bmatrix}
        (I_{\bas k} \otimes A_k) \varx k + (I_{\bas k} \otimes B_k) \varu k \\
        \vc{D_k}
    \end{bmatrix} \label{eq:l2:ran_dynamics},
\end{align}
subject to the initial condition $\mux 0 = \mux 0 ^\star, \ \varx 0 = \vc{\mvarx 0^\star}$. \label{eq:lifted_dynamics}
\end{subequations}
\end{lemma}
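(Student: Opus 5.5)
The plan is to substitute the representations \eqref{eq:state_rep} and \eqref{eq:control_rep} into the dynamics \eqref{eq:sys_dyn}. The result must then be rewritten in the basis $\xiv{k+1}$ from \eqref{eq:stoch_basis_def}, and the deterministic part and the coefficient on each basis element are matched separately. Substitution gives
\begin{equation*}
\state{k+1} = (A_k \mux k + B_k \muu k) + (A_k \mvarx k + B_k \mvaru k)\xiv k + D_k \noise k .
\end{equation*}
Since $\xiv{k+1} = [\xiv k^\top, \noise k^\top]^\top$, this equals $\mux{k+1} + \mvarx{k+1}\xiv{k+1}$ with $\mux{k+1} = A_k\mux k + B_k\muu k$ and $\mvarx{k+1} = [A_k\mvarx k + B_k\mvaru k,\ D_k]$, a block matrix of size $n_x \times (\bas k + n_w) = n_x \times \bas{k+1}$. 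This already gives \eqref{eq:l2:mu_dynamics}.

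To justify reading off the mean and coefficient matrix, I will argue that the representation is unique. Because $\xiv{k+1}\sim\N{0}{I_{\bas{k+1}}}$, taking expectations identifies the constant term as $\Ex{\state{k+1}}$. The coefficient matrix $M$ is then determined by $M = \Ex{(\state{k+1}-\Ex{\state{k+1}})\xiv{k+1}^\top}$, because $\Ex{\xiv{k+1}\xiv{k+1}^\top}=I$. This step uses \Cref{asmp:independence_represent}: the noise $\noise k$ is independent of $\xiv k$ and has identity covariance, so the basis elements are orthonormal in $L^2$. The same assumption guarantees that $\cntrl k$ depends only on $\xiv k$ and not on $\noise k$. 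That is why the $\noise k$ block contains only $D_k$, with no contribution from $B_k$.

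Next I vectorize with the column-stacking convention, using $\vc{AXB} = (B^\top\otimes A)\vc X$. This gives $\vc{A_k\mvarx k} = (I_{\bas k}\otimes A_k)\varx k$ and $\vc{B_k\mvaru k} = (I_{\bas k}\otimes B_k)\varu k$. The vectorization of the horizontally concatenated matrix $[M_1,\ M_2]$ is $[\vc{M_1}^\top, \vc{M_2}^\top]^\top$, because columns are stacked in order. Together these yield \eqref{eq:l2:ran_dynamics}. The initial condition follows directly from $\state 0 = \mu_0^\star + \mvarx 0^\star\xiv 0$ with $\bas 0 = n_x$.

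The main obstacle is the bookkeeping of the basis ordering. The new noise coordinates must be appended after $\xiv k$, so that the $D_k$ block occupies the last $n_w$ columns. This is what makes the vectorization a simple stack rather than a permuted one. The matching of dimensions $n_x\bas{k+1} = n_x\bas k + n_x n_w$ provides a check on this ordering.
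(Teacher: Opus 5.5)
Your proposal is correct and follows essentially the same route as the paper. You substitute \eqref{eq:state_rep} and \eqref{eq:control_rep} into \eqref{eq:sys_dyn}, read off $\mvarx{k+1} = [A_k\mvarx k + B_k\mvaru k,\ D_k]$ because the basis $\xiv{k+1}$ has identity covariance, and vectorize column-wise. Your identity $M = \Ex{(\state{k+1}-\Ex{\state{k+1}})\xiv{k+1}^\top}$ just makes explicit the uniqueness step that the paper attributes to the basis elements being uncorrelated.
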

\begin{proof}

    Using linearity of expectation over \eqref{eq:sys_dyn}, the definition of $\mux k, \muu k$ in \eqref{eq:state_rep} and \eqref{eq:control_rep}, \eqref{eq:l2:mu_dynamics} follows. Under the initial condition specified for $\mux 0$ and $\varx 0$, \eqref{eq:sys_dyn} is expressed as
    \begin{equation}
    \begin{split}
     {\mvarx {k+1}} \xiv {k+1} 
      &= \begin{bmatrix}
          A_k {\mvarx k} + B_k {\mvaru k} & D_k
      \end{bmatrix} \begin{bmatrix}
          \xiv k \\ \noise k
      \end{bmatrix}.
    \end{split}
    \end{equation}
    Under the basis definition \eqref{eq:stoch_basis_def}, and the fact that individual elements of the basis are uncorrelated, we obtain $ {\mvarx {k+1}} = \begin{bmatrix}
          A_k {\mvarx k} + B_k {\mvaru k} & D_k
       \end{bmatrix}$. Application of the vectorization operation and using $\vc{A_k \mvarx k} = (I_{\bas k}\otimes A_k) {\varx k}$, results in \eqref{eq:l2:ran_dynamics}.

\end{proof}
\subsection{Linear Chance Constraint}
Given the mixed-multi-step linear chance constraint \eqref{eq:def_lcc_mixed_multi}, we introduce a scalar R.V. $
    \mathbf y^{(i)} := a_i^\top \state{0:N} + \alpha_i^\top \cntrl{0:N-1}
$.
The chance constraint is equivalent to $\Pr{\mathbf y^{(i)} \le b} \ge 1-\epsilon_i$.
Therefore, under the stochastic basis,
\begin{equation}
\begin{split}
    \mathbf y^{(i)} = &a_i^\top \mux{0:N} + \alpha_i^\top \muu{0:N-1} \\
    &+ a_i^\top \mvarx{0:N} \xiv N + \alpha_i^\top \mvaru{0:N-1} \xiv{N},
\end{split}
\end{equation}
here, 
$\mux{0:N}, \muu{0:N-1}, \mvarx{0:N}, \mvaru{0:N-1}$ are vertical stacks of $\mux i, \muu i, \mvarx i, \mvaru i$, respectively, and the last two require an appropriate zero padding.

Following~\Cref{l1:linear_chance} and as the components under the stochastic basis are affine over the $\mux{}, \muu{}, \varx{}, \varu{}$, each chance constraint translates to a second-order cone constraint.

\begin{equation}\label{eq:mixed_multi_lcc_lifted}
\scalebox{0.91}{%
$\begin{aligned}
    y_0^{(i)} = a_i^\top \mux{0:N} + \alpha_i^\top \muu{0:N-1}, \ 
    y^{(i)} = a_i^\top \mvarx{0:N} + \alpha_i^\top \mvaru{0:N-1} \\
    y_0^{(i)} + \Phi_{\mathcal N}^{-1} (1-\epsilon) \|y^{(i)} \|_I \le b 
\end{aligned}$%
}
\end{equation}

\subsection{Terminal Covariance Constraint}
The covariance constraint \eqref{eq:cov_control} in lifted representation is:
\begin{equation}
    \Vr{\state N }=\mvarx N \mvarx N ^\top \preceq \Sigma_f.
\end{equation}
Using $I_{\bas N} \succ 0$ and Schur's complement lemma, we obtain a semidefinite constraint on the decision variables
\begin{equation}
    \mvarx N \mvarx N ^\top \preceq \Sigma_f \iff \begin{bmatrix}
        \Sigma_f & \mvarx N \\ \mvarx N ^\top & I_{\bas N}
    \end{bmatrix} \succeq 0 \label{eq:cov_control_lifted}.
\end{equation}
\subsection{Conservative Approximation of Quadratic Chance Constraint to Covariance Constraint}

Let $\mathcal B(Q)$ denote the ellipsoidal set defined by $Q \in \pd{n_x}$,
\begin{equation}
    \mathcal B (Q) :=  \{x\in\R^{n_x}| x^\top Q^{-1}x \le 1 \}.
\end{equation}
Then, using set inclusion, it follows for $Q_1, Q_2 \in \pd {n_x}$:
\begin{equation}
    \mathcal B (Q_1) \subseteq \mathcal B (Q_2) \iff Q_1 \preceq Q_2 \label{eq:ellipse-set-inclusion}.
\end{equation}
Given a Gaussian R.V. $\state{} \sim \mathcal N (0, \Sigma)$, the $(1-\epsilon)$-confidence ellipsoidal set is $\mathcal B(\Phi_{\chi_{n_x}^2}^{-1}(1-\epsilon)\Sigma)$ as defined using $\chi_{n}^2$ distribution~\cite{lew_chance-constrained_2020}. 
Additionally, if the $(1-\epsilon)$-confidence ellipsoid is contained inside the ellipsoidal set formed by $Q$, then the probability constraint holds. Using \eqref{eq:ellipse-set-inclusion}, the following serves as a conservative approximation:

\begin{equation}
    \Phi_{\chi_n^2}^{-1}(1-\epsilon)\Sigma \preceq Q \implies 1-\epsilon \le \Pr{\state{}\in\mathcal{B}(Q)} .
\end{equation}
Hence, using the above and \eqref{eq:cov_control_lifted} for the terminal state, the following serves as a LMI approximation of the quadratic chance constraint \eqref{eq:termQ}:
\begin{equation}
    \begin{bmatrix}
        \tfrac 1 {\zeta_{Q}} \termQ & \mvarx N \\ \mvarx N ^\top & I_{\bas N}
    \end{bmatrix} \succeq 0 \label{eq:qcc_sdp_lifted},
\end{equation}
here, $\zeta_{Q} = \Phi_{\chi_n^2}^{-1}(1-\epsilon_N)$.

\begin{lemma}
    If $\eta \sim \N{0}{\Sigma}$ and
    \begin{equation}
        \Tr({\Sigma}) \le \frac{1}{\left(\Phi_{\mathcal N}^{-1}\left(\frac{1+\sqrt[n]{1-\epsilon}}{2}\right)\right)^2} \label{eq:l3:trace}
    \end{equation}
    then
    $
        \Pr{\|\eta\|_I \le 1} \ge 1-\epsilon
    $.\label{l3:QCC_to_Trace}
\end{lemma}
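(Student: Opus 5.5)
The plan is to reduce the norm event to a box event on independent standard Gaussians, whose probability factorizes. Write $n$ for the dimension of $\eta$, and take the eigendecomposition $\Sigma = U\Lambda U^\top$ with $U$ orthogonal and $\Lambda = \mathrm{diag}(\lambda_1,\dots,\lambda_n)$, $\lambda_i \ge 0$. Then $\eta$ has the same law as $U\Lambda^{1/2}\mathbf z$ with $\mathbf z \sim \N{0}{I_n}$. Since $U$ is orthogonal,
\begin{equation*}
\|\eta\|_I^2 = \mathbf z^\top \Lambda \mathbf z = \sum_{i=1}^n \lambda_i \mathbf z_i^2 ,
\end{equation*}
where the $\mathbf z_i$ are independent scalar standard Gaussians. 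This representation also covers a singular $\Sigma$, so no separate case is needed.

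Next, set $c := \Phi_{\mathcal N}^{-1}\bigl(\tfrac{1+\sqrt[n]{1-\epsilon}}{2}\bigr)$, which is well defined and nonnegative because the argument lies in $[1/2,1)$ for $\epsilon\in(0,1]$. Consider the event $E := \{|\mathbf z_i| \le c \ \forall i\}$. On $E$ we have
\begin{equation*}
\sum_i \lambda_i \mathbf z_i^2 \le c^2 \sum_i \lambda_i = c^2 \Tr(\Sigma) \le 1,
\end{equation*}
using the hypothesis \eqref{eq:l3:trace}. Hence $E \subseteq \{\|\eta\|_I \le 1\}$, and so $\Pr{\|\eta\|_I\le 1} \ge \Pr{E}$.

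Finally, independence of the $\mathbf z_i$ gives $\Pr{E} = \prod_i \Pr{|\mathbf z_i|\le c} = (2\Phi_{\mathcal N}(c)-1)^n$. By the choice of $c$, $2\Phi_{\mathcal N}(c)-1 = \sqrt[n]{1-\epsilon}$, so $\Pr{E} = 1-\epsilon$, which is the claim.

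The only delicate point is the first step: the orthogonal change of variables must produce \emph{independent} coordinates, so that the box probability factorizes. This holds because $\Lambda$ is diagonal and $\mathbf z$ is standard normal. For $\epsilon=0$ the bound degenerates to $c=\infty$, which forces $\Tr(\Sigma)=0$, and the statement then holds trivially.
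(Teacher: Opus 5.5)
Your proof is correct and is essentially the paper's argument: diagonalize $\Sigma$, inscribe in the unit ball an eigen-aligned box with half-widths proportional to $\sqrt{\lambda_i}$, and factor the box probability using independence of the rotated coordinates. The only difference is cosmetic. You fix the half-widths at $c\sqrt{\lambda_i}$, so the box has probability exactly $1-\epsilon$ and the trace hypothesis is spent on containment in the ball. The paper instead takes half-widths $\sqrt{\lambda_i/\Tr(\Sigma)}$, so containment is automatic and the hypothesis is spent, via monotonicity, on the probability bound. Your version also avoids the paper's implicit need for $\Tr(\Sigma)>0$ and treats $\epsilon=0$ explicitly.
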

\begin{proof}
    Let the singular value decomposition of $\Sigma$ be denoted as $\Sigma = U \Lambda U^\top$, with $\{\lambda_i\}_{i=1}^n$ being the singular values, $\lambda_\star := \Tr(\Lambda)$ and the transformed vector $\xi := U^\top \eta$.
    
    Then, $\xi \sim \N{0}{\Lambda}$.
    Let $\mathcal C$ denote a hypercube set defined as  $$\mathcal{C} = \left\{x\in \R^n : |x_i| \le \sqrt\frac{{\lambda_i}}{{\lambda_\star}}\right\}. $$
    Using the separability of the integrals, we obtain:
    \begin{align*}
        \Pr{ \xi \in \mathcal C } &= \prod_{i=1}^n \left(
        \int_{-\sqrt{ \lambda_i/{\lambda_\star}}}^{\sqrt{\lambda_i/{\lambda_\star}}} \frac{\exp(-x^2/2\lambda_i)}{\sqrt{2\pi \lambda_i}} dx
        \right) \\
        & = \left(2\Phi_{\mathcal N}\left(\frac{1}{\sqrt{\lambda_\star}}\right) - 1\right)^n =: 1-f_{\mathcal N} (\lambda_\star)
    \end{align*}
    Therefore, $\lambda_\star \le f_{\mathcal N}^{-1}(\epsilon)
    $ if and only if $\Pr{\xi \in \mathcal C } \le 1-\epsilon$, as the function $f_{\mathcal N}(\lambda_\star)$ is monotonically increasing for $\lambda_\star>0$
    
    Let $\mathcal S = \{x\in \R^{n}: \|x\|_I \le 1\}$ denote the unit-sphere. It follows that the hypercube is contained within the sphere, i.e., $\mathcal C \subset \mathcal S$, therefore, $\Pr{ \xi \in \mathcal C } < \Pr{ \xi \in \mathcal S }$.

    As $\Pr{\eta \in \mathcal S} = \Pr{\xi \in \mathcal S}$ and $\Tr({\Sigma}) = \Tr(\Lambda)$, the claim follows.
\end{proof}
\subsection{Conservative Approximation of Quadratic Chance Constraint to Quadratic Constraints}
Given the constraint \eqref{eq:termQ} and the representation \eqref{eq:state_rep}, the constraint is
$
    \Pr{\| L_{\text{CC}} ^{-1}\mvarx{i} \xiv i\|_I^2 \le 1 } \ge 1 - \epsilon,
$
here $L_{\text{CC}} L_{\text{CC}}^\top = \termQ$. As $\eta = L_{\text{CC}} ^{-1}\mvarx{i} \xiv i$ is a Gaussian vector of size $n_x$, using~\Cref{l3:QCC_to_Trace}, we obtain a conservative quadratic constraint approximation on the decision variables,
\begin{equation}
        \varx i^\top (I_{\bas i} \otimes \termQ^{-1}) \varx i \le \frac{1}{\left(\Phi_{\mathcal N}^{-1}\left(\frac{1+\sqrt[n_x]{1-\epsilon}}{2}\right)\right)^2} \label{eq:qcc_trace_lifted},
\end{equation}
by using the trace trick 
and applying the standard vectorization identity $\Tr(A^\top B C) = \vc{A}^\top (C^\top \otimes B) \vc{C}$.
\begin{rmrk}
    The above constraint can also be set up similarly for a mixed multi-step quadratic chance constraint involving a linear combination of states and control inputs across multiple time steps.
\end{rmrk}
\begin{rmrk}
    Using Markov's concentration inequality~\cite{nakka_trajectory_2023}, one can obtain $\Tr(\Sigma) \le \epsilon$ instead of \eqref{eq:l3:trace} which is conservative as $\lim_{\epsilon \to 0^{+}} \epsilon^{-k}f_{\mathcal N}^{-1}(\epsilon) \to \infty, \forall k >0$. 
\end{rmrk}
\section{Main Theorem: Deterministic Convex Problem}
Using the above results we formulate a convex problem over the decision variables ${\mux{}, \muu{}, \varx{}, \varu{}}$ as follows:
\begin{problem}[\textbf{Lifted Convex Reformulation}]\label{prob:LDLOC}
\begin{subequations}

\begin{equation}
\scalebox{0.95}{
$\begin{split}
\min_{\mux{}, \muu{}, \varx{}, \varu{}} J = \sum_{k=0}^{N-1} \quadform{Q_k}{\mux k} + \quadform{(I_{\bas k} \otimes Q_k)}{\varx k} 
+ \quadform{R_k}{\muu k} \\
+ \quadform{(I_{\bas k} \otimes R_k)}{\varu k} 
+  \quadform{Q_N}{(\mux N-x^\star)} \\
+ \quadform{(I_{\bas N} \otimes Q_N)}{\varx N}
+ \sum_{i \in T_{wp}^\star} \quadform{Q_\star}{(\mux i - \mux i^\star)}
\end{split}
$}\label{eq:thm1:cost}
\end{equation}
\begin{align}
   \text{s.t.} \quad \eqref{eq:lifted_dynamics},\ \forall \ k\in \{0, 1, \dots, N-1\} \label{eq:thm1:sys-dyn}\\
    \eqref{eq:mixed_multi_lcc_lifted}\ \forall\ i \in  \itt{m}, \label{eq:thm1:lcc-lifted}\\
    \eqref{eq:cov_control_lifted}, \eqref{eq:qcc_sdp_lifted} \text{ or }\eqref{eq:qcc_trace_lifted} \label{eq:thm1:qcc_cov}.
\end{align} \label{eq:prob:lifted}
\end{subequations}
\end{problem}

\begin{theorem}[\textbf{Exact Convex Reformulation}]
\label{thm:exact_convex_reformulation}
Under~\Cref{asmp:independence_represent} and the lifted 
representation \eqref{eq:state_rep}, \eqref{eq:control_rep}, 
\Cref{prob:LDLOC} restricted to dynamics \eqref{eq:lifted_dynamics}, 
linear chance constraints \eqref{eq:mixed_multi_lcc_lifted}, and 
covariance constraint \eqref{eq:cov_control_lifted} is an exact 
deterministic convex reformulation of \Cref{prob:LSOC}. 
Specifically:
\begin{enumerate}[label=(\roman{*})]
    \item \textbf{Cost Equivalence:}
    \begin{equation}
        J_{\text{\Cref{prob:LSOC}}}(\state{}, \cntrl{}) = 
        J_{\text{\Cref{prob:LDLOC}}}(\mux{}, \muu{}, \varx{}, \varu{})
    \end{equation}
    
    \item \textbf{Feasible Set Equivalence:}
    \begin{equation}
    \scalebox{0.98}{%
    $\begin{split}
        (\mux{}, \muu{}, \varx{}, \varu{}) 
        \text{ feasible for \Cref{prob:LDLOC}} \iff \\ 
        \exists\ \text{affine policy } \pi \in \Pi_{\text{affine}} 
        \text{ feasible for \Cref{prob:LSOC}}
    \end{split}$%
    }
    \end{equation}
\end{enumerate}
\end{theorem}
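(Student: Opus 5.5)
The plan is to build an explicit correspondence between affine causal policies $\pi\in\Pi_{\text{affine}}$ and lifted decision variables $(\mux{},\muu{},\varx{},\varu{})$. Then check that this correspondence carries the cost and every constraint of \Cref{prob:LSOC} to the matching object of \Cref{prob:LDLOC}. The key observation comes from \Cref{asmp:independence_represent}. Any $\cntrl k$ generated by $\pi$ is an affine function of $\state 0,\noise 0,\dots,\noise{k-1}$. Since $\state 0=\mu_0^\star+\mvarx 0^\star\xiv 0$ and $\xiv k$ stacks $\xiv 0$ and $\noise 0,\dots,\noise{k-1}$ by \eqref{eq:stoch_basis_def}, each $\cntrl k$ can be written as $\muu k+\mvaru k\xiv k$. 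This is exactly \eqref{eq:control_rep}.

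One subtlety is that if $\mvarx 0^\star$ is singular, the representation in $\xiv 0$ is non-unique, but it always exists. Conversely, any $(\muu k,\mvaru k)$ defines a causal affine control. When $\Sigma_0\succ 0$ one can recover $\xiv 0=\mvarx 0^{\star-1}(\state 0-\mu_0^\star)$; otherwise one allows the policy to depend on the basis $\xiv k$ directly. I would state that $\Pi_{\text{affine}}$ is taken in this sense. By induction with Lemma~2, the state produced under the dynamics \eqref{eq:sys_dyn} is $\state k=\mux k+\mvarx k\xiv k$, with $(\mux k,\varx k)$ given by \eqref{eq:lifted_dynamics}. So the map $\pi\mapsto(\mux{},\muu{},\varx{},\varu{})$ sends policies onto the set satisfying \eqref{eq:thm1:sys-dyn}, and every point of that set is realized by some policy.

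For \textbf{(i)}, I would use \eqref{eq:state_mean_cov} and its control analogue: $\Ex{\quadform{Q_k}{\state k}}=\quadform{Q_k}{\mux k}+\Tr(Q_k\mvarx k\mvarx k^\top)$. Then apply $\Tr(\mvarx k^\top Q_k\mvarx k)=\varx k^\top(I_{\bas k}\otimes Q_k)\varx k$, the same vectorization identity used for \eqref{eq:qcc_trace_lifted}. The terminal term works the same way: $\Ex{\quadform{Q_N}{(\state N-x^\star)}}=\quadform{Q_N}{(\mux N-x^\star)}+\Tr(Q_N\mvarx N\mvarx N^\top)$. The waypoint term depends only on $\Ex{\state i}=\mux i$. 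Summing these gives \eqref{eq:thm1:cost} term by term.

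For \textbf{(ii)}, each $\mathbf y^{(i)}$ is a scalar of the form $y_0^{(i)}+y^{(i)}\xiv N$. Here the lower-dimensional $\xiv k$ are embedded into $\xiv N$ by zero padding, which is legitimate because $\xiv k$ is a sub-vector of $\xiv N$. \Cref{l1:linear_chance} then gives that \eqref{eq:def_lcc_mixed_multi} holds if and only if \eqref{eq:mixed_multi_lcc_lifted} holds, since $\epsilon_i<0.5$. For the covariance constraint, $\Vr{\state N}=\mvarx N\mvarx N^\top$ by \eqref{eq:state_mean_cov}, and the Schur-complement equivalence \eqref{eq:cov_control_lifted} finishes it. Combining these with the surjectivity and realizability of the correspondence gives both directions of the feasible-set equivalence. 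Convexity follows because the dynamics are affine, the chance constraints are SOC constraints (the coefficient $\Phi_{\mathcal N}^{-1}(1-\epsilon)\ge0$), the covariance constraint is an LMI, and the cost is a convex quadratic in the decision variables since $Q_k,R_k\succeq0$.

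The main obstacle is the correspondence in the first step, specifically the direction from lifted variables back to a policy in $\Pi_{\text{affine}}$. It is delicate when $\Sigma_0$ is singular, because $\xiv 0$ is then not a function of $\state 0$. It also requires that the noise $\noise k$ not be injected into $\mvaru k$ at time $k$; this is guaranteed by the dimension $\bas k$, which enforces causality. I would handle both points by defining the policy directly on the realized basis variables and noting that the cost and constraints depend only on the joint law, which is fixed by the lifted variables.
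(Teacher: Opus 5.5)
Your proposal is correct and follows essentially the same route as the paper: cost equivalence comes from the vanishing cross term together with $\Tr(\mvarx k^\top Q_k \mvarx k)=\varx k^\top(I_{\bas k}\otimes Q_k)\varx k$, and feasible-set equivalence comes from the policy $\cntrl k=\muu k+\mvaru k\xiv k$ in one direction and the induced lifted variables in the other, with \Cref{l1:linear_chance} and the Schur complement handling the constraints. Your treatment of a singular $\Sigma_0$ and your explicit convexity check are refinements the paper omits; for singular $\Sigma_0$ the paper's claim that the lifted variables are ``uniquely defined'' fails, and $\Pi_{\text{affine}}$ must be read as policies on $\xiv k$, but neither point changes the argument.
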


\begin{proof}
(i) \textbf{Cost Equivalence:}
For the quadratic stage cost, expanding using \eqref{eq:state_rep}:
\begin{equation}
\begin{split}
    \Ex{\quadform{Q_k}{\state k}} &= 
    \Ex{(\mux k + \mvarx k \xiv k)^\top Q_k (\mux k + \mvarx k \xiv k)}\\
    &= \quadform{Q_k}{\mux k} 
 + \Ex{(\mvarx k \xiv k)^\top Q_k \mvarx k \xiv k}.
\end{split}
\end{equation}
The cross term vanishes as $\Ex{\xiv k} = 0$. Applying the 
trace trick and vectorization identity to the remaining term:
\begin{equation}
\begin{split}
    \Ex{(\mvarx k \xiv k)^\top Q_k \mvarx k \xiv k} 
    &= \varx{k}^\top (I_{\bas k} \otimes Q_k) \varx{k}.
\end{split}
\end{equation}
Hence:
$\Ex{\quadform{Q_k}{\state k}} = 
    \quadform{Q_k}{\mux k} + 
    \quadform{(I_{\bas k} \otimes Q_k)}{\varx k}$. Similarly for the control cost: $\Ex{\quadform{R_k}{\cntrl k}} = 
    \quadform{R_k}{\muu k} + 
    \quadform{(I_{\bas k} \otimes R_k)}{\varu k}$. Applying the same argument to the terminal cost and 
waypoint terms, \eqref{eq:thm1:cost} is obtained, 
establishing cost equivalence.

(ii) \textbf{Feasible Set Equivalence:} \textit{Forward direction} $(\Rightarrow)$: 
Given a feasible solution $(\mux{}, \muu{}, \varx{}, \varu{})$ 
to \Cref{prob:LDLOC}, define the affine policy 
$\pi \in \Pi_{\text{affine}}$ as:
\begin{equation}
    \cntrl k = \muu k + \mvaru k \xiv k
\end{equation}
By construction, this policy satisfies the dynamics 
\eqref{eq:sys_dyn} and~\Cref{asmp:independence_represent}. 
The linear chance constraints \eqref{eq:def_lcc_mixed_multi} 
are satisfied by \Cref{l1:linear_chance}, as the SOCP 
constraint \eqref{eq:mixed_multi_lcc_lifted} is an exact 
reformulation under Gaussian distributions. The covariance 
constraint \eqref{eq:cov_control} is satisfied by 
\eqref{eq:cov_control_lifted} via Schur's complement.

\textit{Backward direction} $(\Leftarrow)$: 
Given an affine policy $\pi \in \Pi_{\text{affine}}$ 
feasible for \Cref{prob:LSOC}, the lifted variables 
$(\mux{}, \muu{}, \varx{}, \varu{})$ are uniquely defined 
via \eqref{eq:state_rep}, \eqref{eq:control_rep} and 
satisfy \eqref{eq:lifted_dynamics} by construction. 
The satisfaction of \eqref{eq:mixed_multi_lcc_lifted} 
follows from \Cref{l1:linear_chance}, and 
\eqref{eq:cov_control_lifted} follows from 
\eqref{eq:state_mean_cov} and Schur's complement. 
Hence, the lifted variables are feasible for \Cref{prob:LDLOC}.
\end{proof}
\begin{rmrk}
Note that the optimality guarantee of \Cref{thm:exact_convex_reformulation} 
is restricted to the class of affine causal policies 
$\Pi_{\text{affine}}$ defined in~\Cref{asmp:independence_represent}.
\end{rmrk}
\begin{prop}[\textbf{Conservative Approximation of 
Quadratic Chance Constraints}]
\label{prop:conservative_qcc}
Under \Cref{asmp:independence_represent} and the lifted representation \eqref{eq:state_rep},\eqref{eq:control_rep}, both \eqref{eq:qcc_sdp_lifted} and \eqref{eq:qcc_trace_lifted} are conservative (sufficient) approximations of \eqref{eq:termQ}. Neither dominates the other: \eqref{eq:qcc_sdp_lifted} is tighter when the singular values of $\mvarx i \mvarx i^\top$ are nearly equal, while \eqref{eq:qcc_trace_lifted} is tighter when they are highly spread.
\end{prop}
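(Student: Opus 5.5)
The plan is to reduce both constraints to conditions on the eigenvalues of the normalized covariance. Write $\termQ = L_{\text{CC}}L_{\text{CC}}^\top$ and $\Sigma := L_{\text{CC}}^{-1}\mvarx i \mvarx i^\top L_{\text{CC}}^{-\top}$, with eigenvalues $\lambda_1\ge\dots\ge\lambda_{n_x}\ge 0$. By \eqref{eq:state_rep}, \eqref{eq:state_mean_cov} and \Cref{asmp:independence_represent}, the centered state $\state i-\Ex{\state i}=\mvarx i\xiv i$ is Gaussian with covariance $\mvarx i\mvarx i^\top$. Hence \eqref{eq:termQ} reads $\Pr{\|\eta\|_I\le 1}\ge 1-\epsilon$ with $\eta\sim\N{0}{\Sigma}$.

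\textbf{Sufficiency.} For \eqref{eq:qcc_sdp_lifted}, Schur's complement with $I_{\bas i}\succ 0$ (exactly as in \eqref{eq:cov_control_lifted}) gives $\zeta_Q\mvarx i\mvarx i^\top\preceq\termQ$. The $\chi^2$ confidence-ellipsoid inclusion \eqref{eq:ellipse-set-inclusion} then yields \eqref{eq:termQ}. For \eqref{eq:qcc_trace_lifted}, the trace trick shows that the left-hand side equals $\Tr(\termQ^{-1}\mvarx i\mvarx i^\top)=\Tr(\Sigma)$, and \Cref{l3:QCC_to_Trace} applies directly. In eigenvalue form, set $c:=\Phi_{\mathcal N}^{-1}\big(\tfrac{1+\sqrt[n_x]{1-\epsilon}}{2}\big)$. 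Then \eqref{eq:qcc_sdp_lifted} is $\lambda_1\le 1/\zeta_Q$ and \eqref{eq:qcc_trace_lifted} is $\sum_j\lambda_j\le 1/c^2$. Here $\zeta_Q$ is the $(1-\epsilon)$-quantile of $\sum_j z_j^2$ and $c^2$ is the $(1-\epsilon)$-quantile of $\max_j z_j^2$, where the $z_j$ are i.i.d. standard normal. The second claim holds because $\Pr{\max_j|z_j|\le c}=(2\Phi_{\mathcal N}(c)-1)^{n_x}=1-\epsilon$.

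\textbf{Non-domination.} I compare the two feasible sets on two extreme spectra; ``tighter'' means admitting a larger set of $\Sigma$.
\begin{enumerate}
\item \emph{Equal eigenvalues,} $\lambda_j\equiv\lambda$. The SDP condition allows $\lambda\le 1/\zeta_Q$ and the trace condition allows $\lambda\le 1/(n_x c^2)$. On the event $\{\max_j z_j^2\le c^2\}$, which has probability $1-\epsilon$, we have $\sum_j z_j^2\le n_x c^2$. The inclusion of events is strict with positive-probability difference for $n_x\ge 2$, so $\Pr{\chi^2_{n_x}\le n_xc^2}>1-\epsilon$. This gives $\zeta_Q<n_xc^2$, and the SDP approximation is strictly less conservative.
\item \emph{Rank one,} $\lambda_1=\lambda$ and $\lambda_j=0$ otherwise. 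The SDP condition allows $\lambda\le 1/\zeta_Q$ and the trace condition allows $\lambda\le 1/c^2$. Since $\sum_j z_j^2\ge\max_j z_j^2$ pointwise, with strict stochastic dominance for $n_x\ge 2$, the quantiles satisfy $\zeta_Q>c^2$. The trace approximation is therefore strictly less conservative.
\end{enumerate}

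The step needing care is making ``nearly equal'' and ``highly spread'' precise. I would handle it by continuity. Both constraint thresholds, viewed as functions of the normalized spectrum $\lambda/\|\lambda\|_1$ on the simplex, are continuous. The strict inequalities at the barycenter and at a vertex therefore persist on neighborhoods of each. This yields open regions of spectra where each approximation strictly dominates, and it completes the proof. The strictness of the quantile comparisons for $n_x\ge 2$ is the main technical point; it follows because the relevant event differences contain open sets of positive Gaussian measure.
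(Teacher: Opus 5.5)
Your proof is correct. The sufficiency half is the paper's argument: Schur complement plus $\chi^2$-ellipsoid inclusion for \eqref{eq:qcc_sdp_lifted}, and the trace identity plus the hypercube lemma for \eqref{eq:qcc_trace_lifted}. You genuinely go further on non-domination. The paper's proof only contrasts $\lambda_{\max}\le 1/\zeta_Q$ with $\sum_j\lambda_j\le f_{\mathcal N}^{-1}(\epsilon)$ and concludes that dominance depends on skewness. That contrast alone decides nothing, because the constants matter: if $\zeta_Q\ge n_xc^2$ the LMI set would lie inside the trace set, and if $\zeta_Q\le c^2$ the reverse would hold.

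Your viewpoint makes the structure transparent. Writing $\|\eta\|_I^2=\sum_j\lambda_jz_j^2$, the two constraints are exactly the conditions under which the events $\{\lambda_{\max}(\Sigma)\sum_jz_j^2\le 1\}$ and $\{\Tr(\Sigma)\max_jz_j^2\le 1\}$ have probability at least $1-\epsilon$; the latter is precisely the paper's hypercube event. The bounds $\max_jz_j^2\le\sum_jz_j^2\le n_x\max_jz_j^2$ then give the strict sandwich $c^2<\zeta_Q<n_xc^2$, which the paper never establishes. With it, your two witness spectra make ``neither dominates'' rigorous. As a by-product, each approximation excludes points the other certifies feasible, which justifies the paper's remark that neither is exact.

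You can also replace the continuity step by a sharp criterion. Both constraints are homogeneous along rays $s\Sigma$, so \eqref{eq:qcc_sdp_lifted} admits the larger scaling exactly when $\lambda_{\max}(\Sigma)/\Tr(\Sigma)<c^2/\zeta_Q$, and the sandwich places this threshold strictly inside $(1/n_x,1)$.

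Two small points. Keep $n_x\ge 2$ and $0<\epsilon<1$ explicit; for $n_x=1$ one has $\zeta_Q=c^2$ and the two constraints coincide. You rightly use the spectrum of the $\termQ$-normalized matrix $L_{\text{CC}}^{-1}\mvarx i\mvarx i^\top L_{\text{CC}}^{-\top}$; the statement's ``singular values of $\mvarx i\mvarx i^\top$'' is literally correct only when $\termQ$ is a multiple of the identity.
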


\begin{proof}
For \eqref{eq:qcc_sdp_lifted}, the Gaussian R.V. 
$(\state i - \Ex{\state i}) \sim \mathcal{N}(0, 
\mvarx i \mvarx i^\top)$ has a $(1-\epsilon)$-confidence 
ellipsoidal set $\mathcal{B}(\zeta_Q \mvarx i \mvarx i^\top)$ 
where $\zeta_Q = \Phi_{\chi_{n_x}^2}^{-1}(1-\epsilon)$. 
Feasibility of \eqref{eq:qcc_sdp_lifted} implies 
$\mathcal{B}(\zeta_Q \mvarx i \mvarx i^\top) \subseteq 
\mathcal{B}(\termQ)$ via set inclusion 
\eqref{eq:ellipse-set-inclusion} and Schur's complement, 
hence \eqref{eq:termQ} is satisfied. For 
\eqref{eq:qcc_trace_lifted}, feasibility implies 
$\Tr(L_{\text{CC}}^{-1} \mvarx i \mvarx i^\top 
L_{\text{CC}}^{-\top}) \le f_{\mathcal{N}}^{-1}(\epsilon)$,  which by~\Cref{l3:QCC_to_Trace} implies \eqref{eq:termQ}. In both cases, the converse fails as the set inclusions are strict. 

For the relative tightness, \eqref{eq:qcc_sdp_lifted} is governed by $\lambda_{\max} \le \tfrac{1}{\zeta_Q}$ while \eqref{eq:qcc_trace_lifted} is governed by $\sum_j \lambda_j \le f_{\mathcal{N}}^{-1}(\epsilon)$, hence the dominance depends on the skewness of 
$\mvarx i \mvarx i^\top$ as stated. \end{proof}

\begin{figure}
\centering
    \begin{subfigure}[b]{0.48\textwidth}
    \centering
      \includegraphics[width=0.75\textwidth]{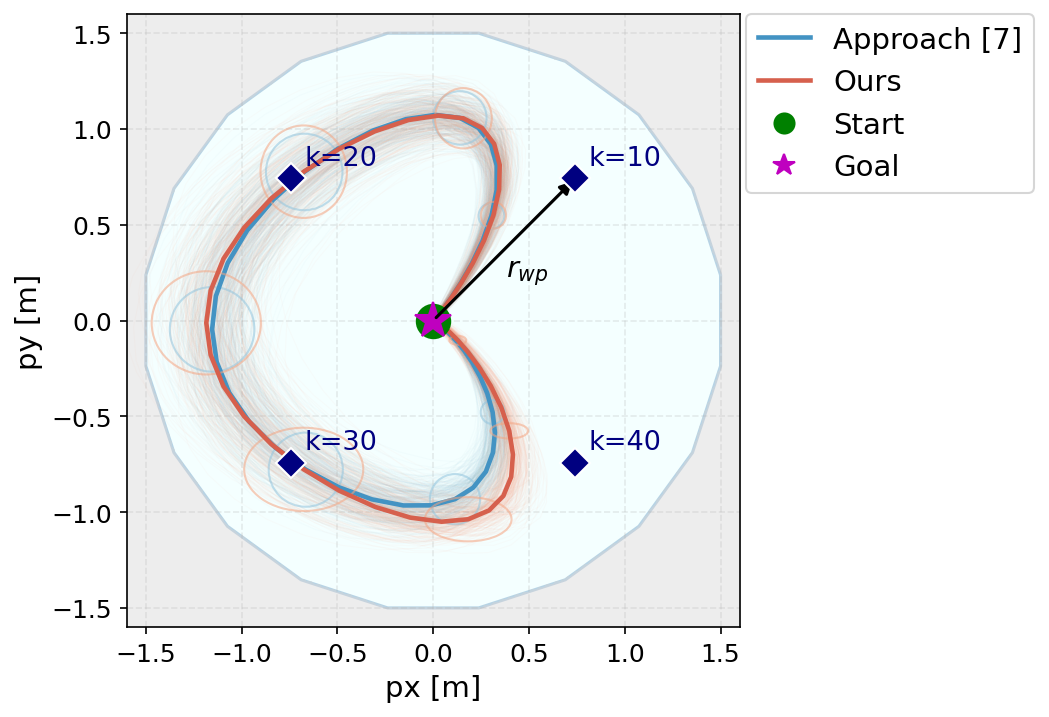}
      \caption{Circle Arena, ($r_{\text{wp}} = 1.05$) with $22\%$ cost reduction}
    \end{subfigure}
    \begin{subfigure}[b]{0.48\textwidth}
      \includegraphics[width=1.0\textwidth]{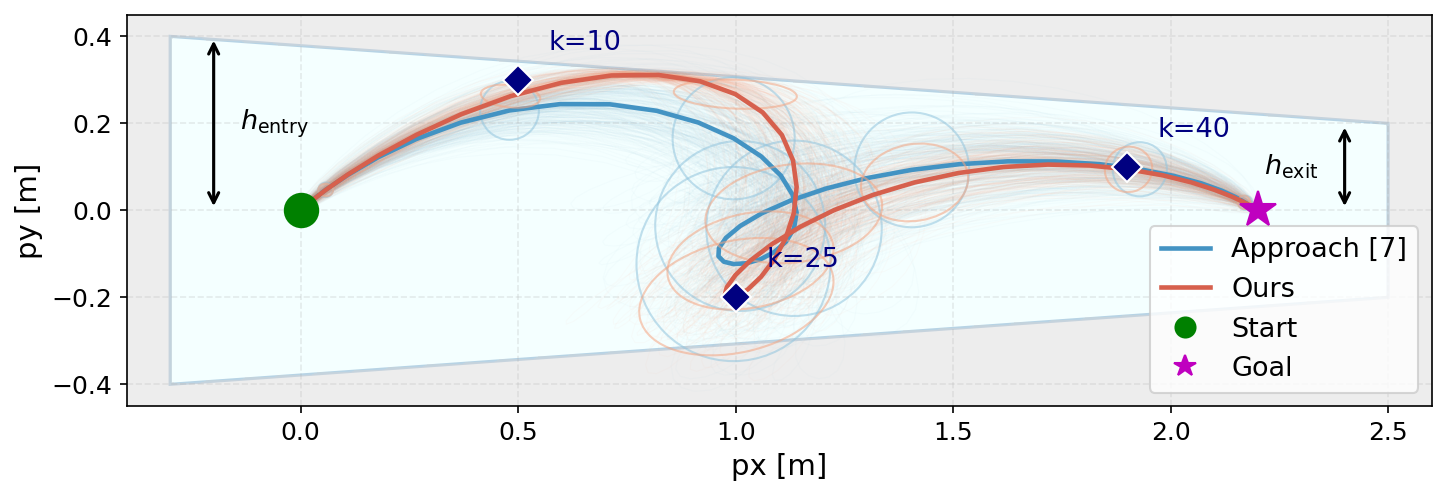}
      \caption{Funnel Corridor, $(h_{\text{entry}}, h_{\text{exit}})= (0.4, 0.2)$ with $32\%$ cost reduction}
    \end{subfigure}
    \caption{Minimum Snap Path Planning for the linearized lateral dynamics of a quadrotor under the two different setups under stochastic perturbation forces ($\sigma_1=\sigma_2 = 0.05$). The proposed approach leads to a cost reduction as compared to~\cite{lew_chance-constrained_2020}, as it can shape the covariance.}
    \label{fig:2:snap_traj}
\end{figure}
\begin{table}[ht] 
    \centering
    \caption{Performance comparison across varying noise levels ($\sigma$) and environment parameters for the Circle Arena and Funnel Corridor scenarios.  $\checkmark$ and $\times$ indicate successful completion and infeasibility, respectively.
    }
    \label{tab:parameter_sweep_summary}
    \begin{subtable}{\linewidth} 
    \centering
    \caption{Circle Arena}
    \label{tab:circle_arena}
    \scalebox{0.93}{%
    \begin{tabular}{c c c c c c}
        \toprule
        $\sigma$ & $r_{\text{wp}}$ & \textbf{\cite{lew_chance-constrained_2020}} & \textbf{\cite{rapakoulias_discrete-time_2023}} & \textbf{Ours} & $1 -\frac{J_{\text{ours}}}{J_{\text{\cite{lew_chance-constrained_2020}}}}$\\
        \midrule
        0.010 & 0.80 & $\checkmark$ & $\checkmark$ & $\checkmark$ & 4.43\% \\
        0.060 & 0.70 & $\checkmark$ & $\times$    & $\checkmark$ & 26.50\% \\
        0.060 & 0.50 & $\checkmark$ & $\times$    & $\checkmark$ & 35.48\% \\
        0.070 & 0.50 & $\times$    & $\times$    & $\checkmark$ & 100\% \\
        0.070 & 0.70 & $\times$    & $\times$    & $\checkmark$ & 100\% \\
        0.850 & 0.50 & $\times$    & $\times$    & $\checkmark$ & 100\% \\
        0.850 & 0.70 & $\times$    & $\times$    & $\checkmark$ & 100\% \\
        \bottomrule
    \end{tabular}%
    }
\end{subtable}

\vspace{1.0em}

\begin{subtable}{\linewidth}
    \centering
    \caption{Funnel Corridor}
    \label{tab:funnel_corridor}
    \scalebox{0.93}{%
    \begin{tabular}{c c c c c c c}
        \toprule
        $\sigma$ & $h_{\text{entry}}$ & $h_{\text{exit}}$ & \textbf{\cite{lew_chance-constrained_2020}} & \textbf{\cite{rapakoulias_discrete-time_2023}} & \textbf{Ours} & $1 - \frac{J_{\text{ours}}}{J_{\text{\cite{lew_chance-constrained_2020}}}}$ \\
        \midrule
        0.010 & 0.40 & 0.20 & $\checkmark$ & $\checkmark$ & $\checkmark$ & 4.47\% \\
        0.056 & 0.40 & 0.20 & $\checkmark$ & $\times$    & $\checkmark$ & 43.65\% \\
        0.056 & 0.50 & 0.30 & $\checkmark$ & $\times$    & $\checkmark$ & 24.21\% \\
        0.060 & 0.40 & 0.20 & $\times$    & $\times$    & $\checkmark$ & 100\% \\
        0.060 & 0.50 & 0.30 & $\times$    & $\times$    & $\checkmark$ & 100\% \\
        0.500 & 0.40 & 0.20 & $\times$    & $\times$    & $\checkmark$ & 100\% \\
        0.550 & 0.50 & 0.30 & $\times$    & $\times$    & $\checkmark$ & 100\% \\
        \bottomrule
    \end{tabular}%
    }
\end{subtable}

\vspace{1.0em}

\begin{subtable}{\linewidth}
    \centering
    \caption{Terminal Quadratic Chance Constraint}
    \label{tab3:term_qcc}
    \scalebox{0.93}{%
    \begin{tabular}{ll ccc}
\toprule
 $\sigma_1$ & $\sigma_2$
  & \multicolumn{1}{c}{Markov~\cite{nakka_trajectory_2023}}
  & \multicolumn{1}{c}{LMI \eqref{eq:qcc_sdp_lifted}}
  & \multicolumn{1}{c}{Quadratic \eqref{eq:qcc_trace_lifted}} \\
\midrule
$0.18$ & $0.005$ & $\times$            & $\times$                    & $\checkmark$  \\
$0.20$ & $0.010$ & $\times$            & $\times$                    & $\checkmark$ \\
$0.173$ & $0.173$ & $\times$             & $\checkmark$ & $\times$              \\
$0.175$ & $0.175$ & $\times$             & $\checkmark$ & $\times$              \\
\bottomrule
\end{tabular}%
}
\end{subtable}
\end{table}
\section{Numerical Simulations}
Consider the problem of generating a minimum-snap trajectory for a quadrotor in a 2D plane, where the lateral and longitudinal dynamics are modeled as two quadruple integrators, with random perturbation forces acting on them. The dynamics are obtained via a zero-order hold discretization:
\begin{equation}
\scalebox{0.80}{$
{A} = \begin{bmatrix}
I_2 & T I_2 & \frac{T^2}{2} I_2 & \frac{T^3}{6} I_2 \\
0_2 & I_2 & T I_2 & \frac{T^2}{2} I_2 \\
0_2 & 0_2 & I_2 & T I_2 \\
0_2 & 0_2 & 0_2 & I_2
\end{bmatrix},
{B} = \begin{bmatrix}
\frac{T^4}{24} I_2 \\ 
\frac{T^3}{6} I_2 \\ 
\frac{T^2}{2} I_2 \\ 
T I_2
\end{bmatrix},
{D} = \begin{bmatrix}
\frac{T^2}{2} I_2 \\ 
T I_2 \\ 
I_2 \\ 
0_2
\end{bmatrix}
\begin{bmatrix}
    \sigma_1 & 0 \\ 0 & \sigma_2
\end{bmatrix}
$} 
\end{equation}

here, $T=0.1s, \sigma_1, \sigma_2, 0_2, N =50$ are the discretization time, the noise parameter, the $2\times2$ zero matrix and the prediction horizon.
As shown in~\Cref{tab:lcc_comparison},~\cite{lew_chance-constrained_2020} and~\cite{rapakoulias_discrete-time_2023} account for chance constraints on state and control, we present comparisons with them.
\cite{lew_chance-constrained_2020} requires feedback gains, which we obtain by solving the Riccati equation. Similarly,~\cite{rapakoulias_discrete-time_2023} requires a covariance estimate, which we obtain via an iterative approach that improves the approximation using the previous solution, starting with an open-loop estimate.
The problem horizon is $N = 50$ steps at $T = \qty{0.1}{\second}$ ($\qty{5}{\second}$ total).

here, $u_{\max{}} = 25$, $\epsilon_u = 0.10$.
We consider two cases with varying parameters to demonstrate better optimality than~\cite{lew_chance-constrained_2020} and better feasibility than~\cite{lew_chance-constrained_2020,rapakoulias_discrete-time_2023}.

The initial state distribution is $\mathcal{N}(\mathbf{0}, \Sigma_0)$ with $\Sigma_0 = 10^{-5}I_8$. The objective is a quadratic cost with stage weight $Q$, control weight $R = I_2$, and terminal weight $Q_T$.

Linear chance constraints are imposed on the position, ensuring it remains within the boundary with $\epsilon=0.05$, and on control input such that $|\cntrl k^{(i)}|<u_m=25$ with $\epsilon_u = 0.05$ for each component and timestep. Soft constraints of waypoints is imposed using $Q_\star = \text{diag}(10^5,0,0,0)\otimes I_2$ for each case. 

\subsection{Circle Arena}
The feasible region is the interior of a circle of radius $R = \qty{1.5}{\meter}$, approximated as a convex polygon with $M = 20$ sides. The vehicle starts and ends at rest at the origin. Four soft waypoints are placed at radius $r_{\text{wp}}$ from the center. The stage cost penalizes deviations in all state components with $Q_{\text{circle}} = \text{diag}(1,0.1, 0.01,10^{-3}) \otimes I_2$ and a large terminal cost $Q_T = 10^6 \cdot \text{diag}(50, 2, 0.5, 0.1)\otimes I_2$. The sweep varies $\sigma$ and $r_{\text{wp}}$ to probe feasibility and cost suboptimality across noise levels and geometric difficulty.
\subsection{Funnel Corridor}
The feasible region is a trapezoidal corridor spanning $p_x \in [-0.3, 2.5]\, \text{m}$, defined by entry $h_{\text{entry}}$ and exit half-widths $h_{\text{exit}}$ using four position linear chance constraints. The vehicle starts at rest at the origin and reaches the goal state $x^\star = [2.2, 0_{1\times7}]^\top$. Three soft waypoints are set inside the narrowing passage. The stage cost suppresses velocity and higher derivatives only, with $Q_{\text{funnel}} = 0.05\text{diag}(0, 1, 0.1, 0.01) \otimes I_2$, and the terminal cost is $Q_T = 10^6 \text{diag}(80, 3, 0.5, 0.1)\otimes I_2$. The sweep varies $\sigma$, $h_{\text{entry}}$, and $h_{\text{exit}}$ to characterize solver behavior as the corridor narrows and noise increases.

Example scenarios are plotted in~\Cref{fig:2:snap_traj} and the results of the parameter sweep are reported in~\Cref{tab:parameter_sweep_summary}. We observe that the proposed approach remains feasible for much larger noise magnitudes, as it does not require an initial guess of covariances~\cite{rapakoulias_discrete-time_2023} or feedback matrices~\cite{lew_chance-constrained_2020}, and performs better than a decoupled approach~\cite{lew_chance-constrained_2020}.
\subsection{Terminal Quadratic Chance Constraint}
Under no linear chance constraints on position, for $\termQ = I_8$ and $\epsilon = 0.05$ in \eqref{eq:termQ}, we vary $\sigma_1, \sigma_2$ to demonstrate the feasibility and optimality in the LMI \eqref{eq:qcc_sdp_lifted}, QC \eqref{eq:qcc_trace_lifted}, and Markov's bound~\cite{nakka_trajectory_2023}. The stage costs, waypoints, terminal cost, and initial and terminal states are identical to the funnel corridor case. 
The findings in \Cref{tab3:term_qcc} demonstrate that Markov's leads to infeasibility easily and high costs, and additionally that the quadratic constraint is better in cases of high anisotropy in noise $\sigma_1\gg \sigma_2$; however, in cases of $\sigma_1 \approx \sigma_2$, the LMI constraint has the largest feasibility range.
\section{Conclusion}
In this paper, we presented a novel representation method that lifts the state vector to explicitly encode moment information, enabling exact representation of linear chance constraints as second-order cone constraints and less conservative approximate representations of quadratic chance constraints as linear-matrix and quadratic constraints over the decision variables. This representation leads to a deterministic convex formulation of the original stochastic optimal control problem.

We validated the framework on minimum-snap trajectory generation for a quadrotor. In the Circle Arena, the proposed approach achieves cost reductions of up to $35\%$ over~\cite{lew_chance-constrained_2020} and remains feasible at $\sigma= 0.85$, while both~\cite{lew_chance-constrained_2020} and~\cite{rapakoulias_discrete-time_2023} become infeasible beyond $\sigma \approx 0.06$. In the Funnel Corridor, we observe a cost improvement up to $43\%$ under moderate noise and feasibility at $\sigma=0.6$, while both baselines fail beyond $\sigma = 0.06$. 
Across all tested configurations, the proposed approach returned a feasible solution in every instance. We observe feasibility even with a noise standard deviation an order of magnitude higher than that of existing approaches, as the exact representation yields the true feasible region, thereby improving feasibility and cost optimality.
\bibliographystyle{IEEEtran}
\bibliography{main}{}
\end{document}